\newtheorem{lemma}{Lemma}
\newtheorem{theorem}{Theorem}
\newtheorem{corollary}{Corollary}
\newcommand{\bfm}[1]{\mbox{\boldmath $#1$}}
\begin{document}

\title{A Note on the Deletion Channel Capacity}

\author{Mojtaba~Rahmati and Tolga~M.~Duman 

\thanks{This work is funded by the National Science Foundation under the \mbox{contract NSF-TF 0830611}.}

\thanks{M. Rahmati is with the School of Electrical, Computer and Energy Engineering (ECEE) of Arizona State University, Tempe, AZ 85287-5706, USA (email: mojtaba@asu.edu); T. M. Duman is with the Department of Electrical and Electronics Engineering, Bilkent University, Bilkent, Ankara, 06800, Turkey (email: duman@ee.bilkent.edu.tr) and he is on leave from the School of ECEE of Arizona State University.
}

}

\maketitle

\begin{abstract}
Memoryless channels with deletion errors as defined by a stochastic channel matrix allowing for bit drop outs are considered in which transmitted bits are either independently deleted with probability $d$ or unchanged with probability $1-d$. Such channels are information stable, hence their Shannon capacity exists. However, computation of the channel capacity is formidable, and only some upper and lower bounds on the capacity exist. In this paper, we first show a simple result that the parallel concatenation of two different independent deletion channels with deletion probabilities $d_1$ and $d_2$, in which every input bit is either transmitted over the first channel with probability of $\lambda$ or over the second one with probability of $1-\lambda$, is nothing but another deletion channel with deletion probability of $d=\lambda d_1+(1-\lambda)d_2$. We then provide an upper bound on the concatenated deletion channel capacity $C(d)$ in terms of the weighted average of $C(d_1)$, $C(d_2)$ and the parameters of the three channels. An interesting consequence of this bound is that $C(\lambda d_1+(1-\lambda))\leq \lambda C(d_1)$ which enables us to provide an improved upper bound on the capacity of the i.i.d. deletion channels, i.e., $C(d)\leq 0.4143(1-d)$ for $d\geq 0.65$. This generalizes the asymptotic result by Dalai~\cite{Dalai_10} as it remains valid for all $d\geq 0.65$. Using the same approach we are also able to improve upon existing upper bounds on the capacity of the deletion/substitution channel.
\end{abstract}

\begin{IEEEkeywords}
Deletion channel, deletion/substitution channel, channel capacity, capacity upper bounds.
\end{IEEEkeywords}

\section{Introduction}
Channels with synchronization errors can be well modeled using bit drop outs and/or bit insertions as well as random errors. There are many different models adopted in the literature to describe these errors. Among them, a relatively general model is employed by Dobrushin~\cite{dobrushin} where memoryless channels with synchronization errors are described by a channel matrix allowing for the channel outputs to be of different lengths for different uses of the channel. As proved in the same paper, for such channels, information stability holds and Shannon capacity exists. However, the determination of the capacity remains elusive as the mutual information term to be maximized does not admit a single letter or finite letter form.

In the existing literature, several specific instances of this model are more widely studied. For instance, by a proper selection of the stochastic channel transition matrix, one obtains the i.i.d. deletion channel which represents one of the simplest models allowing for bit drop-outs which is the model considered in this paper. In a binary i.i.d. deletion channel, the transmitted bits are either received correctly and in the right order or deleted from the transmitted sequence altogether with a certain probability $d$ independent of each other. Neither the receiver nor the transmitter knows the positions of the deleted bits. Despite the simplicity of the model, the  capacity for this channel is still unknown, and only a few upper and lower bounds are available \cite{mitzenmacher2006,drinea2007improved,drinea2007,dario}. Other special cases of the general model by Dobrushin are the Gallager model allowing for insertions, deletions and substitution errors in which every transmitted bit is either deleted with probability of $d$, replaced with two random bits with probability of $i$, flipped with probability of $f$ or received correctly with probability of $1-d-i-f$. Substituting $i=0$ in the Gallager model results into the deletion/substitution channel model which is also considered in this paper. Another look at the deletion/substitution channel can be as a series concatenation of two independent channels such that the first one is a deletion only channel with deletion probability of $d$ and the second one is binary symmetric channel (BSC) with cross error probability of $s=\frac{f}{1-d}$. There are also some capacity upper and lower bounds for the Gallager's deletion channel model in the literature, e.g., ~\cite{dario2,IT-paper, IT_synch}.

In this paper, we prove that the capacity of an i.i.d. deletion channel with deletion probability of $d$ as an arithmetic mean of two different deletion probabilities $d_1$ and $d_2$, i.e., $d=\lambda d_1+(1-\lambda)d_2$ for $\lambda\in[0,1]$, can be upper bounded in terms of the capacity and the parameters of the two newly considered deletion channels. The proof relies on the simple observation that the deletion channel with deletion probability $d$ can be considered as the parallel concatenation of two independent deletion channels with deletion probabilities $d_1$ and $d_2$ where each bit is either transmitted over the first channel with probability $\lambda$ or the second channel with probability $1-\lambda$. 

Thanks to the presented inequality relation among the deletion channels capacity, we are able to improve upon the existing upper bounds on the capacity of the deletion channel for $d\geq 0.65$ \cite{dario}. The improvement is  the result of the fact that the currently known best upper bounds are not convex for some range of deletion probabilities. More precisely, our result allows us to convexify the existing deletion channel capacity upper bound for $d \geq 0.65$, leading to a significant improvement of the upper bound. In other words, we 
are able to prove that for $0\leq\lambda\leq 1$, $C(\lambda d+1-\lambda)\leq \lambda C(d)$, resulting in $C(d)\leq 0.4143 (1-d)$ for $d\geq 0.65$ which is tighter than the result in~\cite{dario}. The same result for the asymptotic scenario $d\rightarrow 1$ was also obtained in \cite{Dalai_10} using a different approach; however our result is valid for $d\geq 0.65$ hence more general. We also note that the best known limiting lower bound (as $d \rightarrow 1$) is $0.1185 (1-d)$ \cite{mitzenmacher2006}. We also demonstrate that a similar improvement is possible for the case of deletion/substitution channels. As an example, we can prove that for $s = 0.03$, an improved capacity upper bound is obtained for $d \geq 0.6$ over the best existing result given in \cite{dario2}.

The paper is organized as follows. In Section~\ref{Sec:quasi_convexity_proof}, we prove the main result of the paper which relates the capacity of the three different deletion channels through an inequality. In Section~\ref{Sec:generalizations}, we generalize the result to the case of deletion/substitution channels and the parallel concatenation of more than two channels. In Section~\ref{Sec:improved-upper-bounds}, we present tighter upper bounds on the capacity of the deletion and deletion/substitution channels based on previously known best upper bounds, and comment on the limit of the capacity as the deletion probability approaches unity. We conclude the paper in Section~\ref{Sec:conclusions}.

\section{Main Theorem}\label{Sec:quasi_convexity_proof}

In this section, we provide the main result of the paper on the capacity of the deletion channel and its proof. Furthermore, we present a simple proof for the special case with $d_2 = 0$, i.e., $C(\lambda d_1+1-\lambda)\leq \lambda C(d_1)$.

The theorem below states our basic result whose proof hinges on a simple observation.

\begin{theorem}\label{thm_convexity}
Let $C(d)$ denotes the capacity of the i.i.d. deletion channel with deletion probability $d$, $\lambda\in [0,1]$ and $d=\lambda d_1+(1-\lambda)d_2$, then we have
\begin{eqnarray}\label{eq_del-quasiconvexity}
C(d)&\leq & \lambda C(d_1)+(1-\lambda)C(d_2)+(1-d)\log(1-d)\nonumber\\
&&-\lambda (1-d_1)\log(\lambda (1-d_1))-(1-\lambda)(1-d_2)\log((1-\lambda)(1-d_2)).
\end{eqnarray}
\end{theorem}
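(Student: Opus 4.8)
The plan is to exploit the key structural observation described in the introduction: the deletion channel with deletion probability $d=\lambda d_1+(1-\lambda)d_2$ can be realized as the parallel concatenation of two independent deletion channels. Concretely, I would introduce an auxiliary i.i.d. selector sequence $\bfm{S}=(S_1,\dots,S_n)$, where each $S_i$ independently equals $1$ with probability $\lambda$ and $2$ with probability $1-\lambda$, and route input bit $X_i$ over channel $1$ (deletion probability $d_1$) if $S_i=1$ and over channel $2$ (deletion probability $d_2$) if $S_i=2$. Averaging over $\bfm{S}$, each bit is deleted with probability $\lambda d_1+(1-\lambda)d_2=d$ independently, so this genuinely reproduces the original deletion channel and $C(d)$ is the capacity of the $\bfm{X}\to\bfm{Y}$ channel.

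The main inequality should come from a genie-aided argument: suppose a genie reveals the selector sequence $\bfm{S}$ to the receiver. This side information cannot decrease capacity, so I would bound $C(d)$ above by the capacity of the channel $\bfm{X}\to(\bfm{Y},\bfm{S})$, or more precisely work at the level of mutual information $I(\bfm{X};\bfm{Y})\le I(\bfm{X};\bfm{Y},\bfm{S})=I(\bfm{X};\bfm{Y}\mid\bfm{S})$ since $\bfm{X}$ and $\bfm{S}$ are independent. Conditioned on $\bfm{S}$, the bits routed to channel $1$ form a deletion subchannel and those routed to channel $2$ form an independent deletion subchannel, so $I(\bfm{X};\bfm{Y}\mid\bfm{S})$ decomposes into the sum of the information carried through the two subchannels. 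The difficulty is that when the receiver knows $\bfm{S}$ but not which specific bits survived, it still cannot tell which output symbols came from channel $1$ versus channel $2$; granting the receiver knowledge of $\bfm{S}$ effectively lets one separate the two output streams only up to this ambiguity. I expect the cleanest route is to further give the receiver the deletion patterns' partition information so that the two output streams are separated, then normalize by block length: the per-bit capacities of the two subchannels (which see a $\lambda$-fraction and $(1-\lambda)$-fraction of the bits) contribute $\lambda C(d_1)+(1-\lambda)C(d_2)$, and the entropy cost of the side information the genie supplied, minus what is double counted, must be subtracted back.

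The correction terms in the bound are exactly the entropic bookkeeping of this separation. I would identify $(1-d)\log(1-d)$ as arising from the entropy of the overall survival pattern (the output-length statistics) at the combined channel, while the two terms $-\lambda(1-d_1)\log(\lambda(1-d_1))$ and $-(1-\lambda)(1-d_2)\log((1-\lambda)(1-d_2))$ account for the per-bit survival-and-routing entropy on each subchannel: a given output bit survived and came through channel $1$ with probability $\lambda(1-d_1)$, and through channel $2$ with probability $(1-\lambda)(1-d_2)$, which sum to $1-d$. Thus the genie's side information about the origin of each surviving bit costs a number of bits governed by the binary-type entropy $-(1-d)\bigl[\tfrac{\lambda(1-d_1)}{1-d}\log\tfrac{\lambda(1-d_1)}{1-d}+\tfrac{(1-\lambda)(1-d_2)}{1-d}\log\tfrac{(1-\lambda)(1-d_2)}{1-d}\bigr]$, and expanding this expression reproduces precisely the three correction terms in \eqref{eq_del-quasiconvexity}. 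The plan is therefore to make the genie argument rigorous at finite block length $n$, express the conditional mutual information as a sum of two subchannel mutual informations plus the separation-entropy term, take $n\to\infty$ so each subchannel term converges to $\lambda C(d_1)$ and $(1-\lambda)C(d_2)$ respectively (using that a subchannel carrying roughly $\lambda n$ bits contributes $\lambda C(d_1)$ per original bit by the capacity definition and information stability), and verify that the residual entropy term equals the stated correction.

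The step I expect to be the genuine obstacle is controlling the subchannel information terms rigorously: conditioned on $\bfm{S}$, the number of bits sent over channel $1$ is a binomial random variable concentrating around $\lambda n$, and I must argue that the normalized mutual information of a deletion channel fed $\lambda n$ bits converges to $\lambda C(d_1)$ while handling the fluctuation in block lengths and the fact that capacity is defined via a supremum over input distributions. Establishing that the single-letter-free capacity behaves well under this random thinning — and that the separation of the two interleaved output streams incurs no more than the claimed entropy penalty asymptotically — is where the care is needed; the rest is an entropy-accounting computation that collapses to the three correction terms.
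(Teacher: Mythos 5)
Your proposal is correct in outline and follows essentially the same route as the paper: realize the channel as a parallel concatenation via an i.i.d.\ selector sequence (the paper's $\bfm F_x$), reveal to the receiver the separated output streams together with the fragmentation pattern (the paper's $\bfm F_y$), and charge the genie exactly the per-surviving-bit origin entropy, whose expansion $-(1-d)\bigl[\tfrac{\lambda(1-d_1)}{1-d}\log\tfrac{\lambda(1-d_1)}{1-d}+\tfrac{(1-\lambda)(1-d_2)}{1-d}\log\tfrac{(1-\lambda)(1-d_2)}{1-d}\bigr]$ indeed yields the three correction terms. The one step you flag as the genuine obstacle---relating the subchannel mutual information under a binomially fluctuating block length to $\lambda_i N C(d_i)$---is closed in the paper not by an information-stability or concentration argument but by invoking the known finite-length bound of Fertonani and Duman \cite[Eqn.~(39)]{dario}, namely $I(\bfm X_i;\bfm Y_i\mid \bfm N_i=N_i)\leq N_i C(d_i)+H(\bfm D_i\mid \bfm N_i=N_i)\leq N_i C(d_i)+\log(N_i+1)$, followed by Jensen's inequality over the binomial $\bfm N_i$ (and, for the fragmentation entropy, $H(\bfm F_y\mid M_1,M_2)\leq \log{{M_1+M_2}\choose M_2}$ with two further concavity/Jensen steps using $E\{\bfm M_1\mid \bfm M_2\}$ and $E\{\bfm M_2\}$), so with that citation in hand your sketch completes to precisely the paper's proof.
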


\begin{proof}
Let us consider two different deletion channels, ${\cal{C}}_1$ and ${\cal{C}}_2$, with deletion probabilities $d_1$ and $d_2$, input sequences of bits $\bfm X_1$ and $\bfm X_2$, and output sequences of bits $\bfm Y_1$ and $\bfm Y_2$, respectively. Denote their Shannon capacities by $C(d_1)$ and $C(d_2)$, respectively. Given a specific $\lambda \in (0,1)$, define a new binary input channel $\cal{C'}$ (shown in Fig.~\ref{fig_ch_model_conv}) with input sequence of bits $\bfm X$ and output sequence of bits $\bfm Y$ as follows: each channel input symbol is transmitted through ${\cal{C}}_1$ with probability $\lambda$, and through ${\cal{C}}_2$ with probability $1-\lambda$, independently of each other. Neither the transmitter nor the receiver knows the specific realization of the ``individual channel selection events,'' i.e., they do not know which specific subchannel a symbol is transmitted through, and which specific subchannel each output symbol is received from. The following two lemmas demonstrate that 1) the new channel is a new i.i.d. deletion channel with deletion probability $d = \lambda d_1 + (1-\lambda) d_2$, 2) if appropriate side information be provided for the transmitter and the receiver then the capacity of the genie-aided channel is upper bounded by $$\lambda C(d_1) + (1-\lambda) C(d_2)+(1-d)\log(1-d)-\lambda(1-d_1) \log(\lambda (1-d_1))-(1-\lambda)(1-d_2)\log((1-\lambda)(1-d_2)).$$ Combining these two results, the proof of the theorem follows easily by noting that the capacity of the new channel $\cal{C'}$ cannot decrease with side information.
\end{proof}
\begin{figure}
   \centering
    \includegraphics[trim=24mm 82mm 40mm 55mm,clip,  width=.75\textwidth]{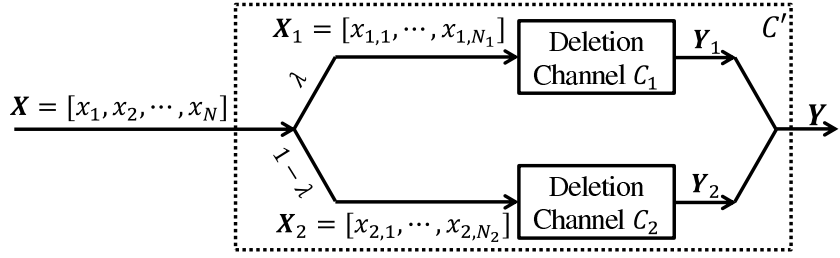}
    \caption{Channel Model $\cal{C'}$}
    \label{fig_ch_model_conv}
\end{figure}

The following two lemmas are employed in the proof of the theorem.
\begin{lemma}\label{lem_parallel}
$\cal{C'}$ as defined in the proof of the theorem above is nothing but a deletion channel with deletion probability $d = \lambda d_1+ (1-\lambda) d_2$.
\end{lemma}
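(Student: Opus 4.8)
The plan is to establish that $\cal{C'}$ and an i.i.d. deletion channel with parameter $d$ induce exactly the same transition law from $\bfm X$ to $\bfm Y$; since a channel is completely specified by its transition probabilities, this identity of laws is precisely the assertion of the lemma. I would describe $\cal{C'}$ through its per-bit behavior: for each input index $i$, introduce the (unobserved) routing variable $S_i\in\{1,2\}$ with $P(S_i=1)=\lambda$, and let $D_i$ be the indicator that bit $i$ is deleted. The pairs $(S_i,D_i)$ are i.i.d.\ across $i$, because both the channel-selection events and the per-subchannel deletions are independent across bits and independent of the input.

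The first key step is a one-bit marginalization over the hidden routing. Conditioning on $S_i$ and using that ${\cal C}_1$ deletes with probability $d_1$ and ${\cal C}_2$ with probability $d_2$, I would compute $P(D_i=1)=\lambda d_1+(1-\lambda)d_2=d$ and hence $P(D_i=0)=\lambda(1-d_1)+(1-\lambda)(1-d_2)=1-d$. Because the $(S_i,D_i)$ are i.i.d., the deletion indicators $D_i$ form an i.i.d.\ Bernoulli sequence with parameter $d$, independent of $\bfm X$. Moreover, since both subchannels are deletion-only (they neither flip nor reorder bits), every surviving bit retains its value and its original position.

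The second step is to assemble these per-bit facts into the full transition law and compare it with that of the i.i.d. deletion channel. For a fixed input $\bfm x$ of length $n$ and output $\bfm y$ of length $m$, the i.i.d. deletion channel assigns weight $(1-d)^m d^{\,n-m}$ to each deletion pattern (choice of which $m$ positions survive) that reproduces $\bfm y$ as the order-preserving subsequence of $\bfm x$, and sums these weights. For $\cal{C'}$ I would sum the same patterns after marginalizing the routing at every position: an admissible pattern has each surviving position contributing $\lambda(1-d_1)+(1-\lambda)(1-d_2)=1-d$ and each deleted position contributing $\lambda d_1+(1-\lambda)d_2=d$, and since routings are independent across positions the pattern weight is again $(1-d)^m d^{\,n-m}$. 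The two transition laws therefore coincide term by term, which proves the claim.

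I expect the only genuinely delicate point to be the order-preservation / recombination argument in the second step: one must state explicitly that the single output $\bfm Y$ seen by the receiver is the list of survivors in their original positional order, not, say, the concatenation of the two subchannel outputs $\bfm Y_1$ and $\bfm Y_2$ (which would have the right length but the wrong order and would not be a deletion channel). Once this interpretation is fixed, $\cal{C'}$ is manifestly a subsequence channel and the probability bookkeeping above closes the argument; the remaining manipulations are routine.
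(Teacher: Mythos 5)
Your proof is correct and takes essentially the same route as the paper's: the paper's one-line argument likewise marginalizes the hidden channel-selection event per transmitted symbol to obtain the per-bit deletion probability $P\{{\cal{C}}_1 \mbox{ used}\}d_1+P\{{\cal{C}}_2 \mbox{ used}\}d_2=\lambda d_1+(1-\lambda)d_2$, then invokes memorylessness of the subchannels and independence of the selection events to conclude that the transition law is that of an i.i.d.\ deletion channel. Your explicit sequence-level bookkeeping with the weights $(1-d)^m d^{\,n-m}$, and your remark that $\bfm Y$ must be the survivors in their original order rather than the concatenation $\bfm Y_1\bfm Y_2$, merely spell out details the paper leaves implicit (the latter point is confirmed by the paper's definition of $\bfm F_y$, from which $\bfm Y$ is recovered by interleaving $\bfm Y_1$ and $\bfm Y_2$).
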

\begin{proof}
For each use of the channel $\cal{C'}$, for any input symbol $x \in \cal{X}$ and channel output $y \in \cal{Y}$, the transition probability is given by $P\{{\cal{C}}_1 \mbox{ is used}\} d_1 + P\{{\cal{C}}_2 \mbox{ is used}\} d_2 = \lambda d_1 + (1-\lambda) d_2$. Noting that the subchannels are memoryless and the channel selection events are independent of each other, this transition matrix precisely defines a deletion channel with deletion probability $d = \lambda d_1 + (1-\lambda) d_2$.
\end{proof}

\begin{lemma}\label{lem_capacity}
The capacity of the channel $\cal{C'}$ as defined in the proof of the theorem above is upper bounded by $$\lambda C(d_1) + (1-\lambda) C(d_2)+(1-d)\log(1-d)-\lambda (1-d_1)\log(\lambda (1-d_1))-(1-\lambda)(1-d_2)\log((1-\lambda)(1-d_2)).$$
\end{lemma}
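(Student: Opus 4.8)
The plan is to establish Lemma~\ref{lem_capacity} by a genie-aided decoupling argument. Fix a blocklength $n$ and let $\bfm{S}=(S_1,\dots,S_n)\in\{1,2\}^n$ denote the routing sequence, where $S_i$ records whether the $i$-th input symbol is sent through ${\cal C}_1$ or ${\cal C}_2$; let $L$ denote the induced labeling of the received symbols, i.e.\ for each output bit the index of the subchannel it emerged from. First I would hand $\bfm S$ to both the transmitter and the receiver as side information. Since revealing side information cannot decrease capacity, $C(\cal{C'})$ is upper bounded by the genie-aided capacity $\lim_{n\to\infty}\tfrac1n\max I(\bfm X;\bfm Y\mid\bfm S)$, where the maximum is over input strategies that may depend on $\bfm S$. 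Because $\bfm S$ and the deletion pattern are independent of the input values, $L$ is independent of $\bfm X$ given $\bfm S$, and conditioning on $\bfm S$ splits the input into the two subsequences $\bfm X_1,\bfm X_2$ (of lengths $n_1,n_2$) routed to ${\cal C}_1,{\cal C}_2$, while conditioning on $L$ splits $\bfm Y$ into $\bfm Y_1,\bfm Y_2$.

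The key step is the single-letterizing inequality $I(\bfm X;\bfm Y\mid\bfm S)\le H(L\mid\bfm S)+I(\bfm X_1;\bfm Y_1\mid\bfm S)+I(\bfm X_2;\bfm Y_2\mid\bfm S)$. I would obtain it by writing $I(\bfm X;\bfm Y\mid\bfm S)=H(\bfm Y\mid\bfm S)-H(\bfm Y\mid\bfm X,\bfm S)$ and bounding the two terms separately. For the first, $H(\bfm Y\mid\bfm S)\le H(\bfm Y,L\mid\bfm S)=H(L\mid\bfm S)+H(\bfm Y\mid L,\bfm S)$, and since $\bfm Y$ is in one-to-one correspondence with $(\bfm Y_1,\bfm Y_2)$ given $L$, subadditivity gives $H(\bfm Y\mid L,\bfm S)\le H(\bfm Y_1\mid\bfm S)+H(\bfm Y_2\mid\bfm S)$. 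For the second, given $(\bfm X,\bfm S)$ the vector $\bfm Y$ is equivalent to $(\bfm Y_1,\bfm Y_2,L)$, the two subchannels are conditionally independent, and $H(L\mid\bfm Y_1,\bfm Y_2,\bfm X,\bfm S)\ge 0$, which yields $H(\bfm Y\mid\bfm X,\bfm S)\ge H(\bfm Y_1\mid\bfm X_1,\bfm S)+H(\bfm Y_2\mid\bfm X_2,\bfm S)$. Subtracting produces the claimed inequality; note that the slack discarded here (the correlation $I(\bfm Y_1;\bfm Y_2\mid L,\bfm S)$ and the residual label entropy) is precisely what separates this bound from a naive decoupling.

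It then remains to evaluate the three pieces asymptotically. Each subchannel is, by the reasoning of Lemma~\ref{lem_parallel} applied at the subsequence level, an ordinary deletion channel, so $\tfrac1{n_i}\max I(\bfm X_i;\bfm Y_i)\to C(d_i)$; combined with the concentration $n_i/n\to\lambda_i$ (with $\lambda_1=\lambda$, $\lambda_2=1-\lambda$) this contributes $\lambda C(d_1)+(1-\lambda)C(d_2)$ after dividing by $n$. For the label entropy I would use $H(L\mid\bfm S)\le H(L)$ and observe that, marginally, each input symbol becomes a type-$1$ surviving output with probability $\lambda(1-d_1)$, a type-$2$ surviving output with probability $(1-\lambda)(1-d_2)$, or is deleted with probability $d$, so that the number of outputs concentrates at $n(1-d)$ and the per-output type is Bernoulli with the corresponding conditional probabilities. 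A direct computation of $\tfrac1n H(L)$ then yields $(1-d)\log(1-d)-\lambda(1-d_1)\log(\lambda(1-d_1))-(1-\lambda)(1-d_2)\log((1-\lambda)(1-d_2))$, exactly the extra terms in the statement, and the lemma follows by collecting the three contributions.

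The step I expect to be the main obstacle is the passage from the finite, random sublengths $n_i$ to the capacities $C(d_i)$ in the first contribution. One must justify interchanging the limit with the expectation over $\bfm S$ and argue, using information stability of the deletion channel, that the normalized per-block mutual information is dominated by $C(d_i)$ in the limit rather than merely converging to it pointwise; this requires a concentration argument for $n_i/n$ together with control of the $o(n_i)$ correction in $\tfrac1{n_i}\max I(\bfm X_i;\bfm Y_i)$ that is uniform enough to survive the averaging over the routing.
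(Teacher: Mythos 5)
Your overall architecture is the same as the paper's: a genie revealing the routing/labeling, a decoupling of the mutual information into the two subchannel terms plus a label-entropy penalty, and the observation that the label entropy per symbol is exactly $(1-d)\log(1-d)-\lambda(1-d_1)\log(\lambda(1-d_1))-(1-\lambda)(1-d_2)\log((1-\lambda)(1-d_2))$. However, your derivation of the key decoupling inequality contains a genuinely false step. You claim that given $(\bfm X,\bfm S)$ the output $\bfm Y$ is ``equivalent to'' $(\bfm Y_1,\bfm Y_2,L)$, and from this deduce $H(\bfm Y\mid\bfm X,\bfm S)\ge H(\bfm Y_1\mid\bfm X_1,\bfm S)+H(\bfm Y_2\mid\bfm X_2,\bfm S)$. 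The equivalence holds only in one direction: $(\bfm Y_1,\bfm Y_2,L)$ determines $\bfm Y$, but $(\bfm Y,\bfm X,\bfm S)$ does not determine $L$ (take $\bfm X$ all zeros: any split of $\bfm Y$ consistent with the sublengths is possible, and indeed there $H(\bfm Y\mid\bfm X,\bfm S)=H(M\mid\bfm S)$ falls short of $H(M_1\mid\bfm S)+H(M_2\mid\bfm S)$). So you only get $H(\bfm Y\mid\bfm X,\bfm S)\le H(\bfm Y_1,\bfm Y_2,L\mid\bfm X,\bfm S)$, the wrong direction. The repair is to avoid lower-bounding $H(\bfm Y\mid\bfm X,\bfm S)$ altogether and instead do what the paper does: reveal the labels to the receiver and apply the chain rule with $L$ \emph{last}, i.e.\ $I(\bfm X;\bfm Y\mid\bfm S)\le I(\bfm X;\bfm Y_1,\bfm Y_2,L\mid\bfm S)=I(\bfm X;\bfm Y_1\mid\bfm S)+I(\bfm X;\bfm Y_2\mid\bfm Y_1,\bfm S)+I(\bfm X;L\mid\bfm Y_1,\bfm Y_2,\bfm S)$, bounding the last term by $H(L\mid\bfm Y_1,\bfm Y_2,\bfm S)\le H(L\mid\bfm S)$; this yields exactly your claimed inequality. (The order matters: expanding $L$ first, as in $I(\bfm X;L\mid\bfm S)+I(\bfm X;\bfm Y_1,\bfm Y_2\mid L,\bfm S)$, is also dangerous, since conditioning the subchannel terms on $L$ leaks deletion-pattern information and those terms need no longer be bounded by the deletion-channel capacities.)

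The second gap is the one you flagged yourself: passing from the random sublengths $n_i$ to $\lambda_i C(d_i)$. The paper does not need information stability, uniform convergence, or an interchange of limits here; it invokes a \emph{finite-blocklength} inequality from \cite[Eqn.~(39)]{dario}, namely $I(\bfm X_i;\bfm Y_i\mid \bfm N_i=N_i)\le N_iC(d_i)+H(\bfm D_i\mid\bfm N_i=N_i)\le N_iC(d_i)+\log(N_i+1)$, valid for every fixed length $N_i$. Combined with $H(\bfm N_i)\le\log(N+1)$ and concavity of $\log$, averaging over the routing is then immediate: $I(\bfm X_i;\bfm Y_i)\le \lambda_i N C(d_i)+\log(\lambda_i N+1)+\log(N+1)$, and all corrections are $O(\log N)$. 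Without some such per-blocklength bound your limiting argument does not close, so this citation (or an equivalent superadditivity argument) is a genuinely missing ingredient, not a technicality. On the positive side, your evaluation of the label entropy is correct and actually cleaner than the paper's: conditioned on the output length $M$, the labels are i.i.d.\ Bernoulli with parameter $\lambda(1-d_1)/(1-d)$, so $H(L)\le \log(N+1)+N(1-d)H_b\bigl(\lambda(1-d_1)/(1-d)\bigr)$, which matches the stated extra terms directly, whereas the paper bounds $H(\bfm F_y\mid \bfm M_1,\bfm M_2)$ by $\log\binom{M_1+M_2}{M_2}$ and needs two applications of Jensen's inequality together with the appendix computations of $E\{\bfm M_1\mid\bfm M_2\}$ and $E\{\bfm M_2\}$.
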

\begin{proof}
We first define a new genie-aided channel which is obtained by providing the transmitter and the receiver of the channel $\cal{C'}$ with appropriate side information, then derive an upper bound on the capacity of the genie-aided channel which is also an upper bound on the capacity of the channel $\cal{C'}$. More precisely, we provide the transmitter with side information on which channel is being used for each transmitted symbol ($\bfm X=\bfm X_1\bfm X_2$), and the receiver with side information on which channel the received symbol comes from ($\bfm Y=\bfm Y_1\bfm Y_2$), and reveal the side information on the fragmentation information, i.e., random process $\bfm F_y$, to the receiver such that by knowing $\bfm F_y$, $\bfm Y_1$ and $\bfm Y_2$, one can retrieve $\bfm Y$. $\bfm F_y$ is defined as an $M$-tuple $\bfm F_y=(f_y[1],\cdots,f_y[M])$, where $M$ denotes the length of the received sequence $\bfm Y$, i.e., $M=|Y|$, and $f_y[i]\in\{1,2\}$ denotes the index of the channel the $i$-th received bit is coming from. We also define $\bfm F_x$ which determines the fragmentation process from the random process $\bfm X$ to $\bfm X_1$ and $\bfm X_2$ as an $N$-tuple $\bfm F_x=(f_x[1],\cdots,f_x[N])$, where $f_x[i]\in\{1,2\}$ denotes the index of the channel the $i$-th bits is going through.

Since $\bfm X \to (\bfm X_1,\bfm X_2,\bfm F_x)\to (\bfm Y_1,\bfm Y_2,\bfm F_y) \to \bfm Y$ form a Markov chain, we can write
\begin{eqnarray}\label{eq_I}
I(\bfm X;\bfm Y)&\leq & I(\bfm X_1,\bfm X_2,\bfm F_x;\bfm Y_1,\bfm Y_2,\bfm F_y)\nonumber\\
&=&I_1+I_2+I_3,
\end{eqnarray}
where $I_1=I(\bfm X_1,\bfm X_2,\bfm F_x;\bfm Y_1)$, $I_2=I(\bfm X_1,\bfm X_2,\bfm F_x;\bfm Y_2|\bfm Y_1)$ and $I_3=I(\bfm X_1,\bfm X_2,\bfm F_x;\bfm F_y|\bfm Y_1,\bfm Y_2)$. For $I_1$, we have
\begin{eqnarray}\label{eq_I1}
I_1&=&I(\bfm X_1;\bfm Y_1)+I(\bfm X_2,\bfm F_x;\bfm Y_1|\bfm X_1)\nonumber\\
&=&I(\bfm X_1;\bfm Y_1),
\end{eqnarray}
where we used the fact that $P(\bfm Y_1|\bfm X_1,\bfm X_2,\bfm F_x)=P(\bfm Y_1|\bfm X_1)$, i.e., $\bfm Y_1$ is independent of $\bfm X_2$ and $\bfm F_x$ conditioned on $\bfm X_1$. Furthermore, by using the facts that $P(\bfm Y_2|\bfm X_2,\bfm Y_1)=P(\bfm Y_2|\bfm X_2)$ and $P(\bfm Y_2|\bfm X_1,\bfm X_2, \bfm F_x, \bfm Y_1)=P(\bfm Y_2|\bfm X_2)$, we obtain
\begin{eqnarray}\label{eq_I2}
I_2&=&I(\bfm X_2;\bfm Y_2|\bfm Y_1)+I(\bfm X_1,\bfm F_x;\bfm Y_2|\bfm Y_1, \bfm X_2)\nonumber\\
&=&H(\bfm Y_2|\bfm Y_1)-H(\bfm Y_2|\bfm X_2)\nonumber\\
&\leq& I(\bfm X_2;\bfm Y_2).
\end{eqnarray}

We are not able to derive the exact value of $I_3$, therefore we derive an upper bound on $I_3$ which results in an upper bound on $I(\bfm X,\bfm Y)$. For $I_3$, if we define $N_i=|X_i|$ and $M_i=|Y_i|$ as the length of the transmitted and received sequences form the $i$-th channel, respectively, then we can write
\begin{eqnarray}
I_3&=& H(\bfm F_y|\bfm Y_1,\bfm Y_2)-H(\bfm F_y|\bfm Y_1,\bfm Y_2,\bfm X_1,\bfm X_2,\bfm F_x)\nonumber\\
&\leq & H(\bfm F_y|\bfm Y_1,\bfm Y_2)\nonumber\\
&=& H(\bfm F_y|\bfm M_1,\bfm M_2).
\end{eqnarray}
\color{black}
For fixed $M_1$ and $M_2$, there are ${{M_1+M_2}\choose M_2}$ possibilities for $\bfm F_y=(f_y[0],\cdots,f_y[{M_1}])$. Therefore, we obtain 
\begin{align}
H(\bfm F_y|\bfm M_1=M_1, \bfm M_2=M_2)&\leq \log\left({{M_1+M_2}\choose M_2} \right)\nonumber\\
&\leq (M_1+M_2)\log{(M_1+M_2)}-M_1\log(M_1)-M_2\log(M_2),
\end{align}
where we have used the inequality $\log{n\choose k}\leq nH_b(\frac{k}{n})$ provided in \cite[p.~353]{cover}. Due to the fact that $(x+a)\log(x+a)-x\log(x)$ is a concave function of $x$ for $a>0$, and $E\{\bfm M_1|\bfm M_2=M_2\}=(N-M_2)\frac{\lambda(1-d_1)}{\lambda+(1-\lambda) d_2}$ (see Appendix~\ref{app-M1M2}), by applying Jensen's inequality, we can write
\begin{eqnarray}
I_3&\leq &E_{\bfm M_1,\bfm M_2}\{H(\bfm F_y| \bfm M_1,\bfm M_2)\}\nonumber\\
&\leq &E_{\bfm M_2}\bigg\{(E\{\bfm M_1|\bfm M_2\}+\bfm M_2)\log(E\{\bfm M_1|\bfm M_2\}+\bfm M_2)\nonumber\\
&&\quad \quad \quad -E\{\bfm M_1|\bfm M_2\}\log(E\{\bfm M_1|\bfm M_2\})-\bfm M_2\log(\bfm M_2)\bigg\}\nonumber\\
&=&E\bigg\{ \left(\frac{\lambda(N-\bfm M_2)(1-d_1)}{\lambda+(1-\lambda) d_2}+\bfm M_2\right)\log\left(\frac{\lambda(N-\bfm M_2)(1-d_1)}{\lambda+(1-\lambda) d_2}+\bfm M_2\right)\nonumber\\
&&\quad -\frac{\lambda(N-\bfm M_2)(1-d_1)}{\lambda+(1-\lambda) d_2}\log\left(\frac{\lambda(N-\bfm M_2)(1-d_1)}{\lambda+(1-\lambda) d_2}\right)-\bfm M_2\log(\bfm M_2)\bigg\}.
\end{eqnarray}
Furthermore since $(a(b-x)+x)\log(a(b-x)+x)-a(b-x)\log(a(b-x))-x\log(x)$ is a concave function of $x$ for $a>0$ and $0<x\leq b$, and $E\{\bfm M_2\}=N(1-\lambda)(1-d_2)$ (see Appendix~\ref{app-M1M2}), by applying Jensen's inequality, we obtain
\begin{eqnarray}\label{eq_I3}
I_3&\leq &N(\lambda (1-d_1)+(1-\lambda) (1-d_2))\log(N(\lambda (1-d_1)+(1-\lambda) (1-d_2)))\nonumber\\
&&-N\lambda(1-d_1)\log(N\lambda (1-d_1))-N(1-\lambda)(1-d_2)\log(N(1-\lambda)(1-d_2))\nonumber\\
&=&N(\lambda (1-d_1)+(1-\lambda) (1-d_2))\log(\lambda (1-d_1)+(1-\lambda) (1-d_2))\nonumber\\
&&-N\lambda (1-d_1)\log(\lambda (1-d_1))-N(1-\lambda)(1-d_2)\log((1-\lambda)(1-d_2)).
\end{eqnarray}

On the other hand, for $I(\bfm X_i;\bfm Y_i)$ ($i\in\{1,2\}$), we can write
\begin{eqnarray}\label{eq_I(X_i;Y_i)}
I(\bfm X_i;\bfm Y_i)&=&I(\bfm X_i;\bfm Y_i,\bfm N_i)-I(\bfm X_i;\bfm N_i|\bfm Y_i)\nonumber\\
&=&I(\bfm X_i;\bfm Y_i|\bfm N_i)+I(\bfm X_i;\bfm N_i)-I(\bfm X_i;\bfm N_i|\bfm Y_i)\nonumber\\
&\leq& I(\bfm X_i;\bfm Y_i|\bfm N_i)+H(\bfm N_i)\nonumber\\
&\leq &I(\bfm X_i;\bfm Y_i|\bfm N_i)+\log(N+1)\nonumber\\
&=&\sum_{N_i=0}^{N}P(\bfm N_i=N_i)I(\bfm X_i;\bfm Y_i|\bfm N_i=N_i)+\log(N+1),
\end{eqnarray}
where in deriving the first inequality we have used the facts that $H(\bfm N_i|\bfm X_i)=0$ and \mbox{$I(\bfm X_i;\bfm N_i|\bfm Y_i)\geq 0$}, and in deriving the second equality the fact that
\begin{equation}
H(\bfm N_i)=-\sum_{n=0}^{N}{N\choose n} \lambda^{n} (1-\lambda)^{N-n}\log\left({N\choose n} \lambda^{n} (1-\lambda)^{N-n}\right)\leq \log(N+1).
\end{equation}
Furthermore, as it is shown in~\cite{dario}, for a finite length transmission over the deletion channel, the mutual information rate between the transmitted and received sequences can be upper bounded in terms of the capacity of the channel after adding some appropriate term, which can be spelled out as \cite[Eqn.~(39)]{dario}
\begin{equation}\label{eq_dario}
I(\bfm X_i;\bfm Y_i|\bfm N_i=N_i)\leq N_iC(d_i)+H(\bfm D_i|\bfm N_i=N_i),
\end{equation} 
where $\bfm D_i$ denotes the number of deletion through the transmission of $N_i$ bits over the $i$-th channel and $$H(\bfm D_i|\bfm N_i=N_i)=-\sum_{n=0}^{N_i}{N_i\choose n}d_i^n(1-d_i)^{N_i-n}\log\left({N_i\choose n}d_i^n(1-d_i)^{N_i-n}\right)\leq \log{(N_i+1)}.$$
Substituting~\eqref{eq_dario} into~\eqref{eq_I(X_i;Y_i)}, we have
\begin{eqnarray}\label{eq_I(X_i;Y_i)_2}
I(\bfm X_i;\bfm Y_i)&\leq & \sum_{N_i=0}^{N} P(\bfm N_i=N_i)\left(N_i C(d_i)+\log(N_i+1)\right)+\log(N+1)\nonumber\\
&\leq & \lambda_i N C(d_i)+\log(\lambda_i N +1)+\log(N+1),
\end{eqnarray}
where the last inequality results since $\log(x)$ is a concave function of $x$, and $\lambda_1=\lambda$ and $\lambda_2=1-\lambda$. Finally, by substituting~\eqref{eq_I(X_i;Y_i)_2}, \eqref{eq_I3}, \eqref{eq_I2} and \eqref{eq_I1} in~\eqref{eq_I}, we obtain
\begin{eqnarray}
I(\bfm X;\bfm Y)&\leq &N\lambda C(d_1)+ \log(\lambda N+1)+N(1-\lambda)C(d_2)+\log((1-\lambda) N+1)\nonumber\\
&&+\ 2\log(N+1)+N(1-d)\log(1-d)-N\lambda (1-d_1)\log(\lambda (1-d_1))\nonumber\\
&&-N(1-\lambda)(1-d_2)\log((1-\lambda)(1-d_2)).\nonumber
\end{eqnarray}
By dividing both sides of the above inequality by $N$, letting $N$ go to infinity, and noting that the inequality is valid for any input distribution $P(\bfm X)$, the proof follows. 
\end{proof}

Note that for the special case of ${\cal{C}}_2$ being a pure deletion channel, i.e., $d_2=1$, the presented upper bound~\eqref{eq_del-sub-quasiconvexity} results into $C(\lambda d_1+1-\lambda)\leq \lambda C(d_1)$. One can observe that to prove the relation $C(\lambda d_1+1-\lambda)\leq \lambda C(d_1)$, there is no need for the entire proof given in Lemma~\ref{lem_capacity}. More precisely, when ${\cal{C}}_2$ is a pure deletion channel, $\bfm X\to \bfm X_1\to\bfm Y_1\to \bfm Y$ form a Markov chain ($\bfm Y=\bfm Y_1$), therefore we can write
\begin{align}
I(\bfm X;\bfm Y)&\leq I(\bfm X_1;\bfm Y_1)\nonumber\\
&\leq \lambda N C(d_1)+\log(\lambda_1 N +1)+\log(N+1),
\end{align}
where the last inequality holds due to \eqref{eq_I(X_i;Y_i)_2}. Furthermore, by dividing both sides of the above inequality by $N$, letting $N$ go to infinity, and the fact that the inequality is valid for any input distribution $P(\bfm X)$, we arrive at $C(\lambda d_1+1-\lambda)\leq \lambda C(d_1)$. 

Another observation from the result $C(\lambda d_1+(1-\lambda))\leq \lambda C(d_1)$ is that by series concatenation of two independent deletion channels with deletion probabilities $d_1$ and $1-\lambda$, we also arrive at a deletion channel with deletion probability of $d=\lambda d_1+1-\lambda$. Therefore we can say that the capacity of the series concatenation of two independent deletion channels can be upper bounded in terms of the capacity of one of them and the parameters of the other.

\section{Some Generalizations and Implications}\label{Sec:generalizations}

\subsection{Generalization to the Case of Deletion/Substitution Channel }

In a deletion/substitution channel (special case of the Gallager channel model without any insertions) with parameters ($d$,$f$), any transmitted bit is either deleted with probability of $d$ or flipped with probability of $f$ or received correctly with probability of $1-d-f$, where neither the transmitter nor the receiver have any information about the position of the deleted and flipped bits. It is easy to show that the result of Theorem~\ref{thm_convexity} can also be generalized to the deletion/substitution channel as given in the following corollary. 
\begin{corollary}
Let $C(d,f)$ denotes the capacity of the deletion/substitution channel with deletion probability $d$ and flip probability $f$,  $\lambda\in [0,1]$, $d=\lambda d_1+(1-\lambda)d_2$ and $f=\lambda f_1+(1-\lambda)f_2$, then we have
\begin{eqnarray}
C(d,f)&\leq & \lambda C(d_1,f_1)+(1-\lambda)C(d_2,f_2)+(1-d)\log(1-d)\nonumber\\
&&-\lambda (1-d_1)\log(\lambda (1-d_1))-(1-\lambda)(1-d_2)\log((1-\lambda)(1-d_2)).
\end{eqnarray}
\end{corollary}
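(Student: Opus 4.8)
The plan is to mirror the architecture of Theorem~\ref{thm_convexity} essentially verbatim, replacing each pure-deletion subchannel by a deletion/substitution subchannel. Concretely, I would introduce two deletion/substitution channels ${\cal{C}}_1$ and ${\cal{C}}_2$ with parameter pairs $(d_1,f_1)$ and $(d_2,f_2)$, and build the parallel-concatenated channel ${\cal{C'}}$ in which every input bit is routed through ${\cal{C}}_1$ with probability $\lambda$ and through ${\cal{C}}_2$ with probability $1-\lambda$, independently across bits. The first step is a direct analogue of Lemma~\ref{lem_parallel}: for a single channel use, the transition behaviour is a convex combination of the two subchannel transition laws, so the probability of deletion is $\lambda d_1+(1-\lambda)d_2=d$ and the probability of a flip is $\lambda f_1+(1-\lambda)f_2=f$. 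Hence ${\cal{C'}}$ is itself a deletion/substitution channel with parameters $(d,f)$, and by definition its capacity is $C(d,f)$.

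Next I would reproduce the genie-aided argument of Lemma~\ref{lem_capacity}. The same Markov chain $\bfm X \to (\bfm X_1,\bfm X_2,\bfm F_x)\to (\bfm Y_1,\bfm Y_2,\bfm F_y) \to \bfm Y$ holds, because supplying the channel-selection side information $\bfm F_x,\bfm F_y$ cannot decrease capacity and only helps the genie-aided receiver reassemble $\bfm Y$. The decomposition $I(\bfm X;\bfm Y)\le I_1+I_2+I_3$ with $I_1=I(\bfm X_1;\bfm Y_1)$, $I_2\le I(\bfm X_2;\bfm Y_2)$, and $I_3\le H(\bfm F_y\mid \bfm M_1,\bfm M_2)$ carries over unchanged, since those steps used only memorylessness of the subchannels and the independence of the selection events, not the specific error mechanism. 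The crucial point to check is that the $I_3$ bound is untouched: the fragmentation entropy depends only on the \emph{lengths} $M_1,M_2$ of the two output substreams, and a flipped bit is still received (not deleted), so the length statistics $E\{\bfm M_2\}=N(1-\lambda)(1-d_2)$ and $E\{\bfm M_1\mid \bfm M_2\}$ are identical to the deletion-only case. Thus the entire $I_3$ chain of Jensen applications yields exactly the same $(1-d)\log(1-d)-\lambda(1-d_1)\log(\lambda(1-d_1))-(1-\lambda)(1-d_2)\log((1-\lambda)(1-d_2))$ contribution.

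The one genuine substitution that must be verified is the per-channel bound~\eqref{eq_dario}: I would need the finite-length inequality $I(\bfm X_i;\bfm Y_i\mid \bfm N_i=N_i)\le N_i\,C(d_i,f_i)+H(\bfm D_i\mid \bfm N_i=N_i)$ for the deletion/substitution channel rather than the pure deletion channel. Here $\bfm D_i$ is still the number of \emph{deletions} in the $i$-th subchannel, whose distribution depends only on $d_i$, so the entropy term is unchanged; the content to confirm is that the argument of~\cite[Eqn.~(39)]{dario} relating the finite-length mutual information to the Shannon capacity goes through for the deletion/substitution channel. I expect this to be the main (and only nontrivial) obstacle, and I would discharge it by noting that the information-stability and finite-to-asymptotic capacity argument in~\cite{dario} is driven by the deletion/synchronization structure and is insensitive to an additional memoryless symbol substitution, so $C(d_i)$ may simply be replaced by $C(d_i,f_i)$ throughout. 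With that in hand, substituting the analogues of \eqref{eq_I1}, \eqref{eq_I2}, \eqref{eq_I3} and \eqref{eq_I(X_i;Y_i)_2} into the decomposition, dividing by $N$, and letting $N\to\infty$ over all input distributions yields the claimed bound on $C(d,f)$, completing the proof of the corollary.
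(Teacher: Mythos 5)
Your proposal follows essentially the same route as the paper's own proof: observe that the parallel concatenation of two deletion/substitution channels is again a deletion/substitution channel with parameters $(\lambda d_1+(1-\lambda)d_2,\lambda f_1+(1-\lambda)f_2)$, and note that substitutions do not alter the distributions of $\bfm N_i$ and $\bfm M_i$, so the genie-aided argument of Lemma~\ref{lem_capacity} carries over unchanged. In fact you are somewhat more careful than the paper's terse proof, since you explicitly flag (and discharge) the one step the paper leaves implicit, namely that the finite-length inequality of \cite[Eqn.~(39)]{dario} must be re-verified with $C(d_i)$ replaced by $C(d_i,f_i)$.
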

\begin{proof}
The proof of Lemma~\ref{lem_parallel} simply holds if we consider ${\cal{C}}_1$ in Fig.~\ref{fig_ch_model_conv} as a deletion/substitution channel with parameters ($d_1$,$f_1$) and ${\cal{C}}_2$ as another deletion/substitution channel with parameters ($d_2$,$f_2$), then ${\cal{C}}$ becomes also a deletion/substitution channel with parameters $(\lambda d_1+(1-\lambda)d_2,\lambda f_1+(1-\lambda)f_2)$. Furthermore, replacing the deletion channel ${\cal{C}}_i$ with deletion probability $d_i$ with a deletion/substitution channel with parameters ($d_i$,$f_i$) does not change the distribution of $\bfm N_i$ and $\bfm M_i$. Therefore, the proof of Lemma~\ref{lem_capacity} holds for the deletion/substitution channel as well.
\end{proof}
Note that a deletion/substitution channel with parameters ($d,f$) can be considered as a series concatenation of two independent channels where the first one is a deletion only channel with deletion probability of $d$ and the second one is a binary symmetric channel (BSC) with cross error probability $s=\frac{f}{1-d}$ ($1-d-f\leq 1$ and if $d=1$ then $s=0$). If we define $C_s(d,s)=C(d,(1-d)s)$, then for $d_2=1$ and $f_2=0$, we obtain
\begin{align}\label{eq_del-sub-quasiconvexity}
C_s(\lambda d_1+1-\lambda ,s)\leq & \lambda C_s(d_1,s).
\end{align}

\subsection{Parallel Concatenation of More Than Two Channels}
So far, we considered the parallel concatenation of two independent deletion channels which is useful in improving upon the existing upper bounds. However, we can also consider the parallel concatenation of more than two deletion channels. If we define the deletion channel $\cal{C}$ as a parallel concatenation of $P$ independent deletion channels ${\cal{C}}_p$ with deletion probability $d_p$ ($p=\{1,\cdots,P\}$) where each input bit is transmitted with probability $\lambda_p$ over ${\cal{C}}_p$, and modify the definition of $\bfm F_y$ such that $f_y[i]\in\{1,\cdots,P\}$ denotes the index of the channel the $i$-th bit is coming from, then for $d=\sum_{p=1}^P \lambda_pd_p$, we have
\begin{align}
C(d)\leq \sum_{p=1}^P \lambda_p C(d_p)+ (1-d)\log(1-d)-\sum_{p=1}^P \lambda_p (1-d_p)\log(\lambda_p(1-d_p)),
\end{align}
where $\sum_{p=1}^P\lambda_p=1$. Note, however, that this result does not give any tighter upper bounds on the deletion channel capacity than the one obtained by considering the parallel concatenation of only two independent deletion channels. 

\section{Improved Upper Bounds on the Deletion Channel  Capacity}\label{Sec:improved-upper-bounds}

An interesting application of the result~\eqref{eq_del-quasiconvexity} on the capacity of the deletion and deletion/substitution channels is in obtaining improved capacity upper bounds. For instance, the best known upper bound on the deletion channel capacity is not convex for $d \geq 0.65$ as shown in Fig.~\ref{del_cap_upper_bound} (with values taken from the boldfaced values in Table IV of~\cite{dario}). As clarified in the table, the best known values for small $d$ are due to \cite{diggavi-capacity}, for a wide range (up to $d \sim .8$) are due to the ``fourth version'' of the upper bound (named $C_4$ in~\cite{dario}), and for large values of $d$ are due to the ``second version" named $C^*_2$ in the same paper. Therefore, the deletion channel capacity upper bound can be improved for $d \in (0.65,1)$ as $C(1-0.35\lambda)\leq \lambda C(0.65)\leq \lambda C_{4}(0.65)$ with $0 \leq \lambda \leq 1$. That is, we have  $C(d) \leq 0.4143 (1-d)$ for $d\in(0.65,1)$. This is illustrated in Fig.~\ref{del_new_upper_bound}.

\begin{figure}[hbt]
\centerline{
  \hbox{
    \resizebox{110mm}{!}
    {\includegraphics[trim=0mm 6mm 0mm 8mm,clip,  width=1\textwidth]{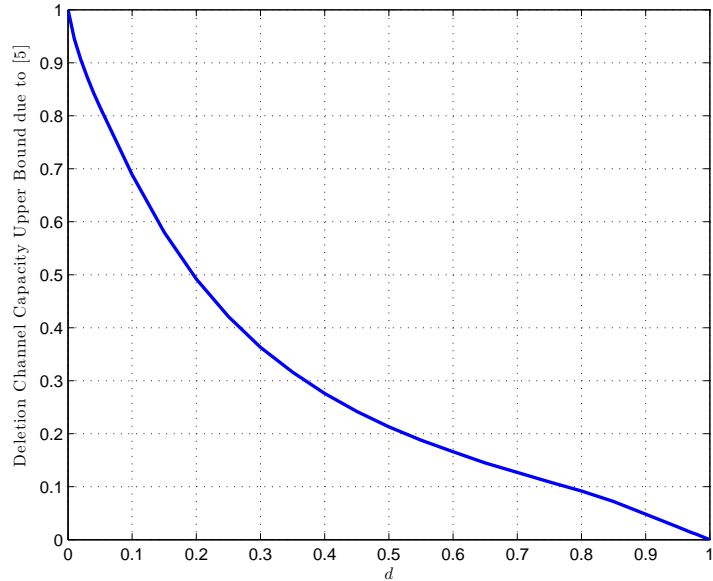}}
  }
}
\caption{Previously best known upper bound on the i.i.d. deletion channel capacity.}
\label{del_cap_upper_bound}
\end{figure}

\begin{figure}[hbt]
\centerline{
  \hbox{
    \resizebox{110mm}{!}
    {\includegraphics[trim=0mm 6mm 0mm 8mm,clip,  width=1\textwidth]{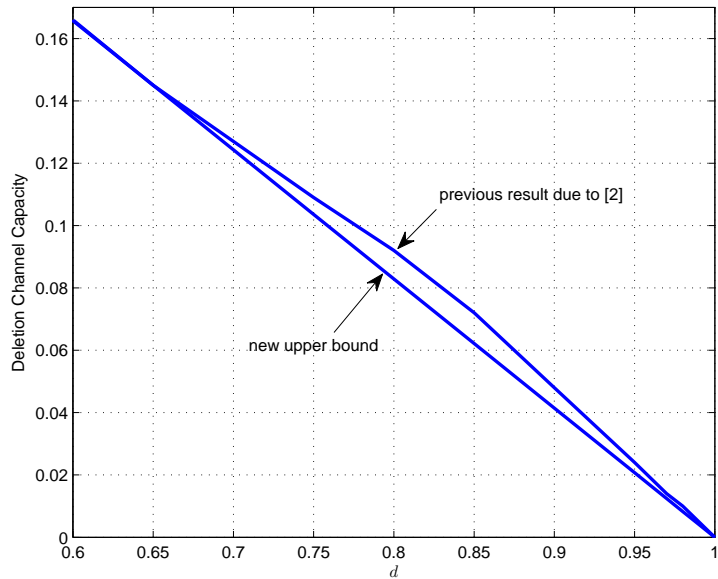}}
  }
}
\caption{Improved upper bound on the deletion channel capacity employing $C(\lambda d+1-\lambda)\leq \lambda C(d)$.}
\label{del_new_upper_bound}
\end{figure}

We note that our result is a generalization of the one in~\cite{Dalai_10} where it was shown that $C(d) \leq 0.4143 (1-d)$ as $d \rightarrow 1$. We also note an earlier asymptotic result on a lower bound derived in~\cite{mitzenmacher2006} which states that $C(d)$ as $d \rightarrow 1$ is larger than $0.1185 (1-d)$.

\begin{figure}[hbt]
\centerline{
  \hbox{
    \resizebox{110mm}{!}
    {\includegraphics[trim=0mm 6mm 0mm 5mm,clip,  width=1\textwidth]{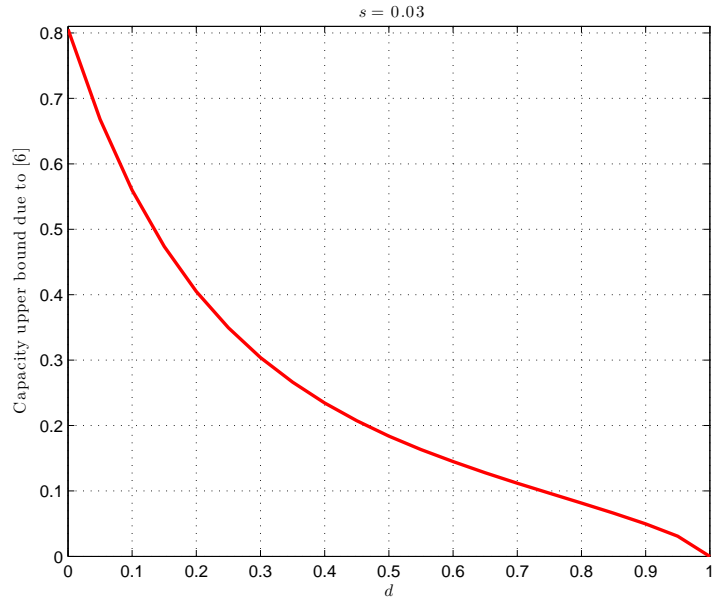}}
  }
}
\caption{Previously best known upper bound on the deletion/substitution channel capacity for $s=0.03$.}
\label{del-sub_cap_upper_bound}
\end{figure}

\begin{figure}[hbt]
\centerline{
  \hbox{
    \resizebox{110mm}{!}
    {\includegraphics[trim=0mm 6mm 0mm 5mm,clip,  width=1\textwidth]{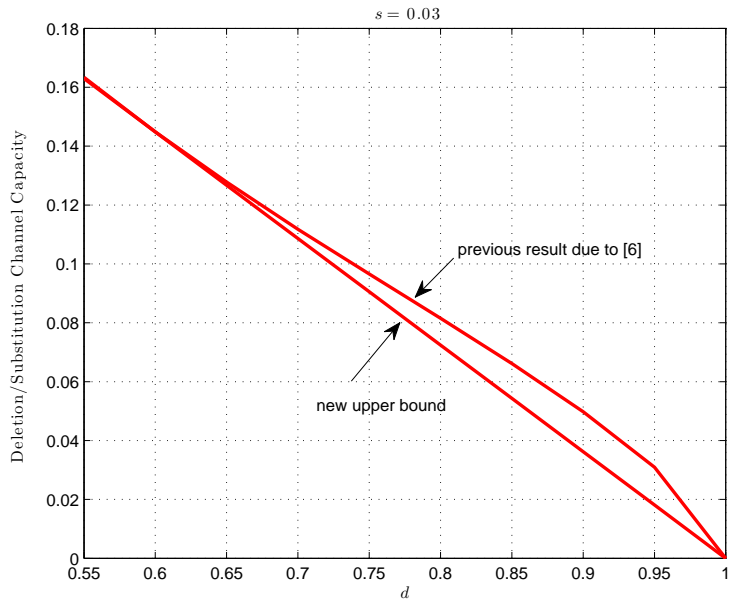}}
  }
}
\caption{Improved upper bound on the deletion/substitution channel capacity for $s=0.03$.}
\label{del-sub_new_upper_bound}
\end{figure}

As another application of the inequality derived in this paper, we can consider the capacity of the deletion/substitution channel. The best known capacity upper bound for this case is given in~\cite{dario2}, e.g., Fig.~1 of \cite{dario2} presents several upper bounds for fixed $s=0.03$ (see~Fig.~\ref{del-sub_cap_upper_bound}). It is clear that this bound is not a convex function of the deletion probability  for $d \geq 0.6$, hence it can be improved. That is, applying the result in our paper, we obtain, for instance for $s=0.03$, $C_s(d,0.03) \leq 0.3621(1-d)$ for $d\geq 0.6$ which is a tighter bound as illustrated in Fig.~\ref{del-sub_new_upper_bound}.

\section{Conclusions}\label{Sec:conclusions}

In this paper, an inequality relating the capacity of a deletion channel to two other deletion channels is found. The main idea is to consider parallel concatenation of two different independent deletion channels and relate the capacity of the resulting deletion channel with the capacity of the first two. An immediate application of this result is in obtaining improved upper bounds on the capacity of the deletion channel as the best available upper bounds are not convex in the deletion probability, and the derived inequality results in a tighter capacity characterization. For an i.i.d. deletion channel, we proved that $C(d) \geq 0.4143 (1-d)$ for all $d \geq 0.65$. This is a stonger result than the earlier characterization in~\cite{Dalai_10} which is valid only asymptotically as $d \rightarrow 1$. We also noted a generalization of the result to the case of a deletion/substitution channel and provided a tigher capacity upper bound for this case as well. \\

\section*{ACKNOWLEDGMENTS}
The authors would like to thank Marco Dalai for his insightful comments on the paper. 

\appendices
\section{Stochastic Properties of $\bfm M_1$ and $\bfm M_2$}\label{app-M1M2}
For $P(\bfm M_1,\bfm M_2)$, we can write
\begin{align}
P(\bfm M_1=M_1,\bfm M_2=M_2)&=\sum_{N_1=M_1}^{N-M_2} P(\bfm M_1=M_1,\bfm M_2=M_2|\bfm N_1=N_1)P(\bfm N_1=N_1)\nonumber\\
&=\sum_{N_1=M_1}^{N-M_2} P(\bfm M_1=M_1|\bfm N_1=N_1)P(\bfm M_2=M_2|\bfm N_1=N_1)P(\bfm N_1=N_1)\nonumber\\
&=\sum_{N_1=M_1}^{N-M_2} {{N_1}\choose {M_1}} d_1^{N_1-M_1}(1-d_1)^{M_1}{{N-N_1}\choose {M_2}} d_2^{N-N_1-M_2}(1-d_2)^{M_2}\times \nonumber\\
&\quad \quad \quad \quad \times {N\choose {N_1}}\lambda^{N_1}(1-\lambda)^{N-N_1}\nonumber\\
&={{N-M_2}\choose {M_1}} {N\choose {M_2}}(\lambda(1-d_1))^{M_1}((1-\lambda)(1-d_2))^{M_2}\times\nonumber\\
&\quad \times \sum_{N_1=M_1}^{N-M_2} {{N-(M_1+M_2)}\choose {N_1-M_1}} (\lambda d_1)^{N_1-M_1} ((1-\lambda)d_2)^{N-N_1-M_2}\nonumber\\
&={{N-M_2}\choose {M_1}} {N\choose {M_2}}(\lambda(1-d_1))^{M_1}((1-\lambda)(1-d_2))^{M_2} d^{N-M_1-M_2}.
\end{align}
Furthermore, due to the structure of the channel $\cal{C}'$, $\bfm M_2$ is binomially distributed, i.e., $P(\bfm M_2=M_2)={N\choose {M_2}}((1-\lambda)(1-d_2))^{M_2} (\lambda+(1-\lambda)d_2)^{N-M_2}$, and as a result $E\{\bfm M_2\}=N(1-\lambda)(1-d_2)$. On the other hand, to obtain $E_{\bfm M_1}\{\bfm M_1|\bfm M_2\}$, we first need to obtain $P(\bfm M_1|\bfm M_2)$, for which we can write
\begin{align}
P(\bfm M_1=M_1|\bfm M_2=M_2)&=\frac{P(\bfm M_1,\bfm M_2)}{P(\bfm M_2)}\nonumber\\
&= {{N-M_2}\choose {M_1}} (\lambda(1-d_1))^{M_1} (\lambda d_1+(1-\lambda)d_2)^{N-M_1-M_2}(\lambda+(1-\lambda)d_2)^{M_2-N}.\nonumber
\end{align}
Therefore, we obtain
\begin{align}
E_{\bfm M_1}\{\bfm M_1|\bfm M_2\}&=\sum_{M_1=0}^{N-M_2}M_1{{N-M_2}\choose {M_1}} (\lambda(1-d_1))^{M_1} (\lambda d_1+(1-\lambda)d_2)^{N-M_1-M_2}(\lambda+(1-\lambda)d_2)^{M_2-N}\nonumber\\
&=(N-M_2)\frac{\lambda(1-d_1)}{\lambda+(1-\lambda)d_2}.
\end{align}

\bibliographystyle{IEEEtran}
\bibliography{myrefs}

\end{document}